\journal{Renewable Energy}
\newtheorem{theorem}{Theorem}
\begin{document}
	
	\begin{frontmatter}
		
		\title{Identification testing via sample splitting -- an application to Structural VAR models}
		
		\author[KBO]{Katarzyna Maciejowska}
		\ead{katarzyna.maciejowska@pwr.edu.pl}

		\address[KBO]{Department of Operations Research and Business Intelligence, Faculty of Management, Wroc{\l}aw University of Science and Technology, 50-370 Wroc{\l}aw, Poland}

		\cortext[cor1]{Katarzyna Maciejowska}
		
		\date{This version: \today}

\begin{abstract}
	
	In this article, a novel identification test is proposed, which can be applied to parameteric models such as Mixture of Normal (MN) distributions, Markow Switching(MS), or Structural Autoregressive (SVAR) models. In the approach, it is assumed that model parameters are identified under the null whereas under the alternative they are not identified. Thanks to the setting, the Maximum Likelihood (ML) estimator preserves its properties under the null hypothesis. The proposed test is based on a comparison of two consistent estimators based on independent subsamples of the data set. A Wald type statistic is proposed which has a typical $\chi^2$ distribution. Finally, the method is adjusted to test if the heteroscedasticity assumption is sufficient to identify parameters of SVAR model. Its properties are evaluated with a Monte Carlo experiment, which allows non Gaussian distribution of errors and mis-specified VAR order. They indicate that the test has an asymptotically correct size. Moreover, outcomes show that the power of the test makes it suitable for empirical applications.
 
\end{abstract}

\begin{keyword}
	identification test\sep  Maximum Likelihood\sep  Wald test\sep  Structural VAR\sep  heterosecedasticity
\end{keyword}

\end{frontmatter}	
\section{Introduction}\label{Sec:Introduction}

Identifications is one of the crucial assumptions, which needs to be fulfilled in order to use properly the Maximum Likelihood estimation method (see \cite{Hsiao:1983}, \cite{Rothenberg1971}, among others). For majority of econometric models, this property is easily proved and hence is not a concern for researchers. Unfortunately, there are families of models, in which the identifications is not guarantee, for example Structural Vector Autoregressive (SVAR) models (\cite{Sims:1980}, \cite{Lutkepohl05}, \cite{Kilian:Lutkepohl:2017}), nonlinear regime switching models such as Smooth Transition Regression (STR) or mixture models (\cite{MCLachlan:etal:2018}, \cite{Dijk:Terasvirts:2002}). 

In \cite{Rothenberg1971}, two types of identification types are discussed: global and local. Both definitions are based on a parametric likelihood function of a random variable, $y\in R^M$. Lets denote by $l(\theta, y)$ a log-likelihood functions computed for the parameter vector $\theta$. Then
it is said that the the true parameter vector $\theta_0$ is \textit{globally identifiable} if for 
\begin{equation}
\forall_{\theta\in \Theta} \forall_{y\in R^K}: \theta_0=\theta \Leftrightarrow l(\theta_0;y)=l(\theta;y),
\end{equation}
 Global identifications implies that the ML estimator is unique and hence under regularity conditions, the Maximum Likelihood (ML) estimator of $\theta$ is consistent. Although desirable, the global identification is not necessary to estimate the model parameters. The sufficient condition is local identification.
The parameter vector $\theta_0$ is \textit{locally identifiable} if 
\begin{equation}
\forall_{\varepsilon}\forall_{\|\theta_0- \theta\|<\varepsilon} \forall_{y\in R^K}: \theta_0=\theta \Leftrightarrow l(\theta_0,y)=l(\theta,y).
\end{equation}
When the local identification is ensured, the parameter space, $\Theta$, can be restricted in such way that the parameters becomes unique. 
One of the examples of models which are identifiable only localy is a Mixture of Normal distributions (MN) model. It can be shown that when the ordering of components in the mixture is changed, which  is called a 'label switching', the parameter vectors change as well but the value of the likelihood functions remain unaffected. This problem can be solved by assuming that the probability of the first state is larger than 0.5 or that the mean of the first component is larger than the second one.

Unfortunately, there are cases, in which the local identification is not satisfied. For example, when the MN model is fitted to data stemming from a simple normal distribution or when the SVAR model is applied without imposing sufficient restrictions. In such case, there are infinitely many parameter vectors, which  provide the same value of the log-likelihood function. As the result, it is not guaranteed that the ML estimator converge to $\theta_0$ as the sample size increases.  Moreover, as shown by \cite{Rothenberg1971}, the information matrix for $\theta_0$ is singular and hence the typical ML test (Wald, LM or LR) either cannot be computed or have an unknown limiting distribution.

In the literature, there are a few identification tests dedicated to particular econometric problems. For example, in case of switching regression such as Markov Switching (MS), STR or structural break modes, it is known as a linearity test because for linear processes the models are not identified. \cite{Luukkonen:etal:1988} use a Taylor expansion of the STR model to verify if the underlying process is linear. \cite{Andrews:1993} explores \emph{Sup} transformation of classical ML statistics and proves its asymptotic properties. \cite{Hansen:1992a} proposed a test for MS models, which is based on LR statistic but has a nonstandard distribution. For SVAR models, the problem is directly called identification and has been studied by \cite{Lutkepohl:Wozniak:2020} and \cite{Lutkepohl:etal:2021}. However, there is no single testing approach, which will be suitable to more general family of parametric models and would be straightforward to apply. \cite{Andrews:Ploberger:1994} and \cite{Hansen:1996} develop tests for cases when the nuisance parameters are presented only under the alternative hypothesis. Unfortunately, their statistics have nonstandard asymptotic distributions, which cannot be tabulated accept special cases. Hence, applying these tests requires conducting additional, complex simulations. Moreover and cannot be applied to some models, for example SVAR.

In this research, a novel approach is proposed for testing  identification, which can be applied to different parametric models. It differs from the existing methods, for example a recent test of \cite{Lutkepohl:etal:2021} or previous \cite{Hansen:1996} and \cite{Andrews:Ploberger:1994} tests, in terms of the hypothesis setting. In the presented approach, similar to \cite{Rossi:Inoue:2011}, it is assumed that the model is identified under the null and not-identified under the alternative. Hence the test supports identification hypothesis, as long as there are no strong evidence against it. This is convenient for researchers who want to use for MN or SVAR models when the data does not reject them.
 Additionally, it can be notice that under the null, there should be not estimation problems and the information matrix resulting from ML method is non-singular.

The testing procedure explores the fact that under the null any consistent estimator should converge to its true value, which is uniquely identified. In order to obtain two independent estimators, the sample is split randomly into two disjoint sets. Next the two estimators are compared with the Wald type of test. This approach has a few advantages. First, it can be applied to any parametric model if sufficient data is available. Second, since the Wald type of test is based on the information matrix rather than its inverse, the statistic could be computed even when the Hessian is singular (under the alternative). Finally, the test has a well known $\chi^2$ distribution and hence is does not require any additional simulations and is straightforward to apply.

This article demonstrates how the approach can be adopted for testing identifications in SVAR models. Recent publications (see \cite{LanneLutkepohl2008}, \cite{LanneMaciejowskaLutkepohl2010} and \cite{Lutkepohl:Netsunajev:2017}, among others) show that statistical data properties, such as heteroscedasticity, can be explored to identify structural parameters without imposing additional restriction. The crucial assumption in this procedure is that shock variances change in different proportions. Although this assumption is very important, it cannot be directly tested. Some papers still use the standard Wald or Likelihood Ratio (LR) tests to evaluate the model, for example \cite{LanneMaciejowskaLutkepohl2010}, \cite{Lutkepohl:Velinov:2016}. Unfortunately, a  $\chi^2$ distribution does not provide a proper approximation of their asymptotic behavior. In other research (\cite{Lutkepohl:Wozniak:2020}), Bayesian methods are used to verify the model assumptions. Recently, \cite{Lutkepohl:etal:2021} proposed a direct frequentist test for assessing the change of volatility in SVAR models. This article  explores the properties of eigen values of covariance matrices to construct a test with an asymptotic $\chi^2$ distribution. Albeit its novelty and straightforward application, the new test is designed only for models with discrete shift of variances and it is not examined how it works for Markov-switching (MS) or STR types of heteroscedasticity. Moreover, as under the null the model is not identified, it is not clear how to state the hypothesis when the evaluated process has a dimension larger than two. The authors propose to apply sequential testing. A method proposed in this article does not suffer from the above problems. There is one null hypothesis, under which the structure of SVAR models is well identified. The test statistic has an asymptotic $\chi^2$ distribution and can be directly applied to different types of heteroscedasticity.

Finally, the performance of a newly proposed identification test is evaluated with a MC experiment. In order to provide results comparable with the previous outcomes of \cite{Lutkepohl:etal:2021}, the same data generating processes are considered. They cover the cases of non-Gaussian distributions, a misspecified order of VAR model and processes of a dimension larger than two. The results indicate that the proposed test has a proper size. Its power increases as the p-values form different sample splitting are combined via a harmonic mean. Moreover, it is robust to model misspecification and works well for higher dimension processes. Finally, the approach is applied to verify the identification of \cite{Kilian:2009}. The results confirm findings of \cite{Lutkepohl:Netsunajev:2014} and indicate that there are no strong evidences that the model is not well identified via heteroscedasticity.

The article is structured as follows. In the first Section \ref{Sec:Test} the proposed identification test is described. Next, its application to SVAR model is described in Section \ref{Sec:SVAR}. The performance of the test is evaluated by a Monte Carlo experiment, which results are shown in Section \ref{Sec:MC}. Section \ref{Sec:Empirical example} presents an exmaple of an empirical application of the testing approach and Section \ref{Sec:Conclusions} concludes.

\section{Identification test}\label{Sec:Test}

Lets consider a parametric model, which is described by a log-likelihood function $L(\theta,Y)=\sum_{t=1}^Tl(\theta, y_t)$, where $Y = \{y_1,...,y_T\}$ is the information set and $T$ is the  sample size. The parameter vector $\theta\in R^M$. In the test, two hypothesis are considered: under the null the model parameters are assumed to be identifiable whereas under the alternative they lack identification. As the result, under $H_0$ (if the likelihood functions satisfied standard regularity conditions (\cite{Greene:18})) the ML estimator becomes consistent and asymptotically normal. These properties are not maintained for $H_1$, under which the estimator does not converge in probability to $\theta_0$. The proposed approach explores the consistency property to derive the test statistic.

 In order to build the test statistic, the sample is split into two disjoint parts and the estimators coming from these two sub-samples are compared with the Wald type of test. The testing procedure consists of the following steps
\begin{enumerate}
	\item Randomly split the information set into two disjoint parts of equal sizes: $Y_1$, $Y_2$, such that $Y_1\cup Y_2=Y$, $Y_1\cap Y_2=\varnothing$
	\item Compute ML estimator for the whole information set and the two parts: $\hat{\theta}$, $\hat{\theta}_1$, $\hat{\theta}_2$, respectively
	\item Compute the average Hessian of the full sample log-likelihood functions for $\hat{\theta}$
	\begin{equation}
	H_T(\hat{\theta}) = \frac{1}{T}\sum_{t=1}^T{\frac{\partial^2l(\theta; y_t)}{\partial \theta \partial \theta'}(\hat{\theta})}
	\end{equation}
	\item Compute the Wald type statistic
	\begin{equation}\label{eq:W}
	W = T/2(\hat{\theta}_1-\hat{\theta}_2)'(-H_T(\hat{\theta})/2)(\hat{\theta}_1-\hat{\theta}_2)
	\end{equation}
\end{enumerate}
It can be shown that under the null the statistic $W$ converges to the $\chi^2(M)$ distribution, where $M$ is the size of the parameter vector $\theta$.

\begin{theorem}
	Suppose a model described by a log-likelihood function $l(\theta, y_t)$ satisfies regularity conditions, which ensure consistency of the ML estimator and its asymptotic normality. Then under the null hypothesis of parameter identification, the test statistic 
		\begin{equation*}
	W = T/2(\hat{\theta}_1-\hat{\theta}_2)'(-H_T(\hat{\theta})/2)(\hat{\theta}_1-\hat{\theta}_2)
	\end{equation*}
	 has an asymptotic $\chi^2(M)$ distribution.
\end{theorem}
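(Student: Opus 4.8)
The plan is to combine the classical asymptotics of the two subsample ML estimators with the consistency of the full-sample Hessian, using the disjointness of the split to obtain asymptotic independence. First I would invoke the stated regularity conditions, which under $H_0$ make the ML estimator consistent and asymptotically normal on any sequence of samples whose size grows to infinity; applying this to each half, of size $n=T/2$, gives
\begin{equation*}
\sqrt{n}\,(\hat\theta_i-\theta_0)\ \xrightarrow{d}\ N\!\left(0,\ \mathcal{I}(\theta_0)^{-1}\right),\qquad i=1,2,
\end{equation*}
where $\mathcal{I}(\theta_0)=-E\!\left[\partial^2 l(\theta_0,y_t)/\partial\theta\partial\theta'\right]$ is the per-observation information matrix, nonsingular under the null.

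Next I would show that $\hat\theta_1$ and $\hat\theta_2$ are asymptotically independent. Expanding each estimator about $\theta_0$, $\sqrt{n}\,(\hat\theta_i-\theta_0)=\mathcal{I}(\theta_0)^{-1}\,n^{-1/2}\sum_{t\in Y_i}\partial l(\theta_0,y_t)/\partial\theta+o_p(1)$, so to first order the two estimators are driven by the score sums over the two disjoint index sets $Y_1$ and $Y_2$. Because the scores at $\theta_0$ have mean zero and, for a correctly specified likelihood, form a martingale difference sequence, the joint limit of these two sums is normal with zero cross-covariance, hence the limits are independent. Combined with the previous display this yields
\begin{equation*}
\sqrt{n}\,(\hat\theta_1-\hat\theta_2)\ \xrightarrow{d}\ N\!\left(0,\ 2\,\mathcal{I}(\theta_0)^{-1}\right),
\end{equation*}
i.e. $\sqrt{T/4}\,(\hat\theta_1-\hat\theta_2)\xrightarrow{d}N(0,\mathcal{I}(\theta_0)^{-1})$.

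Then I would note that consistency of the full-sample estimator $\hat\theta$, together with a uniform law of large numbers and continuity of the Hessian in $\theta$, gives $-H_T(\hat\theta)\xrightarrow{p}\mathcal{I}(\theta_0)$. Rewriting the statistic as
\begin{equation*}
W=\Big(\sqrt{T/4}\,(\hat\theta_1-\hat\theta_2)\Big)'\big(-H_T(\hat\theta)\big)\Big(\sqrt{T/4}\,(\hat\theta_1-\hat\theta_2)\Big)
\end{equation*}
and applying Slutsky's theorem, $W\xrightarrow{d}Z'\mathcal{I}(\theta_0)Z$ with $Z\sim N(0,\mathcal{I}(\theta_0)^{-1})$; since $\mathcal{I}(\theta_0)$ is symmetric positive definite, the quadratic-form identity $Z'\Sigma^{-1}Z\sim\chi^2(M)$ for $Z\sim N(0,\Sigma)$ with $\Sigma=\mathcal{I}(\theta_0)^{-1}$ gives the claimed $\chi^2(M)$ limit.

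The hard part will be the independence step for serially dependent data, as in the SVAR application: one must argue that splitting the sample randomly into two halves does not induce asymptotic correlation between the two subsample scores, which rests on the martingale-difference structure of the score evaluated at the true parameter (or, failing that, on a mixing / near-epoch-dependence condition). A secondary, purely bookkeeping point is verifying that the two factors of $1/2$ in the definition of $W$ and in $-H_T(\hat\theta)/2$ combine with $n=T/2$ so that the centering matrix converges to $\mathcal{I}(\theta_0)$ itself and not to a scalar multiple of it.
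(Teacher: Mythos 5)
Your proof follows essentially the same route as the paper's: asymptotic normality of each half-sample estimator at rate $\sqrt{T/2}$, independence of the two halves so that the difference has limiting variance $2\,\mathcal{I}(\theta_0)^{-1}$, consistency of $-H_T(\hat\theta)$ for $\mathcal{I}(\theta_0)$, and Slutsky plus the normal quadratic-form identity to obtain the $\chi^2(M)$ limit, with the scaling bookkeeping worked out the same way. The only substantive difference is that you justify the asymptotic independence of $\hat\theta_1$ and $\hat\theta_2$ through the score expansion and the martingale-difference structure of the score (and rightly flag serial dependence as the delicate point), whereas the paper simply asserts independence from the disjoint random split; your version is thus a slightly more careful rendering of the same argument.
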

\begin{proof}
	First, it can be observed that $\hat{\theta}_1$ and $\hat{\theta}_2$ are independent because they come from two different, randomly drown sub-samples. Second, as the model is identifiable under the null then the estimators are consistent and the CLT implies that

	\begin{equation}
	\sqrt{T/2}(\hat{\theta}_i-\theta_0) \rightarrow_d N(0,(-H(\theta_0))^{-1}),
	\end{equation}
	where $i=1,2$ and $H(\theta_0) = p\lim_{T\to \infty}H_T(\theta_0)$. The difference of two independent normal random variables is also normal, with the variance being the sum of the individual variances. Hence
	\begin{equation}
	\sqrt{T/2}(\hat{\theta}_1-\theta_1) \rightarrow_d N(0,2(-H(\theta_0))^{-1}),
	\end{equation}
	Finally, as $\hat{\theta}$ is a consistent ML estimator with $\hat{\theta}  \rightarrow_p \theta_0$ then also $H_T(\hat{\theta}) \rightarrow_p H_T(\theta_0)$ and finally
	\begin{equation}
	H_T(\hat{\theta}) \rightarrow_p H(\theta_0).
	\end{equation}
	As a result, 
	\begin{equation}
T/2(\hat{\theta}_1-\hat{\theta}_2)'(-H_T(\hat{\theta})/2)(\hat{\theta}_1-\hat{\theta}_2) \rightarrow_d \chi^2(M).
	\end{equation}
\end{proof}

The statistic $W$ has an intuitive interpretation. It shows that if the model is identifiable, the estimators stemming from two datasets should be similar and hence their difference converges to zero. It can be noticed that if the parameter estimates are not identifiable, the two estimators $\hat{\theta}_1$ and $\hat{\theta}_2$ do not necessary converge to the same value and hence their difference preserves not negligible even for large samples.

\subsubsection{Behavior of test under the alternative}
Behavior of the test under the alternative is a complex and nontrivial problem. When ML estimator is not consistent then the limit of $\hat{\theta}$ does not exist, and so the limits of $\hat{\theta}_1$ and $\hat{\theta}_2$. Then  the difference $\hat{\theta}_1-\hat{\theta}_2$ does not converge in probability to a zero vector.

\begin{theorem}
	Suppose there exists a set $A \subset R^M$, which includes at least two elements, such that if and only if $\tilde{\theta}\in A$, the log-likelihood $l(\tilde{\theta};y)=l(\theta_0;y)$. So the parameters in $A$ are observable equivalent to the true parameter vector $\theta_0$. Then the difference of two independent ML estimators, $\hat{\theta}_1-\hat{\theta}_2$, does not converge in probability to zero. 
\end{theorem}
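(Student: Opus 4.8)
The plan is to show that, under the stated alternative, each sub-sample estimator concentrates on the observationally-equivalent set $A$ but \emph{selects} its location within $A$ in a data-dependent, non-degenerate way, so that two \emph{independent} estimators land near different points of $A$ with probability bounded away from zero, which is incompatible with $\hat\theta_1-\hat\theta_2\to_p 0$.

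First I would locate the estimators. By the information inequality, $\mathbb{E}\,l(\theta;y_t)\le \mathbb{E}\,l(\theta_0;y_t)$ for every $\theta$, with equality exactly when the density implied by $\theta$ coincides with that implied by $\theta_0$, i.e. exactly when $\theta$ is observationally equivalent to $\theta_0$; by hypothesis that set is precisely $A$. Hence the population criterion is maximized exactly on $A$. Feeding this into the usual consistency argument for extremum estimators (using the regularity conditions already assumed -- compactness of $\Theta$ and uniform convergence of $T^{-1}L(\theta,Y_j)$ to $\mathbb{E}\,l(\theta;y_t)$), one gets $d(\hat\theta_j,A)\to_p 0$ for $j=1,2$ (and for the full-sample $\hat\theta$), where $d(\cdot,A)$ is Euclidean distance to $A$. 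Thus no estimator escapes a shrinking neighborhood of $A$, but since $A$ has at least two points and the criterion is flat along $A$, ordinary consistency fails.

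Next I would describe the behavior \emph{within} $A$. Along $A$ the population criterion is constant, so which point of $A$ an estimator picks is driven by the first- and second-order sampling fluctuations of $T^{-1}L(\cdot,Y_j)$ about its expectation; under the same regularity conditions these fluctuations, properly rescaled, have a non-degenerate Gaussian limit, and the maximizer of that limit over $A$ (stabilized by the curvature of $l$ transverse to $A$) is a genuinely random element of $A$. Concretely, I would argue that there are two distinct points $a,b\in A$, a radius $\varepsilon_0>0$ with $\|a-b\|>3\varepsilon_0$, and $\delta_0>0$ such that, along a subsequence, $P(\|\hat\theta_j-a\|<\varepsilon_0)\ge\delta_0$ and $P(\|\hat\theta_j-b\|<\varepsilon_0)\ge\delta_0$ for $j=1,2$. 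Then I would invoke independence: $\hat\theta_1$ and $\hat\theta_2$ are functions of the disjoint, independently drawn sub-samples $Y_1$ and $Y_2$, so along that subsequence $P(\|\hat\theta_1-a\|<\varepsilon_0,\ \|\hat\theta_2-b\|<\varepsilon_0)\ge\delta_0^2>0$; on this event $\|\hat\theta_1-\hat\theta_2\|\ge\|a-b\|-2\varepsilon_0>\varepsilon_0$. Hence $\limsup_T P(\|\hat\theta_1-\hat\theta_2\|>\varepsilon_0)\ge\delta_0^2>0$, which is exactly the statement that $\hat\theta_1-\hat\theta_2$ does not converge in probability to zero. (Equivalently, convergence to zero of a difference of i.i.d. vectors would force their common law to collapse onto a single, possibly moving, point -- contradicting the positive mass near both $a$ and $b$.)

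The main obstacle is the non-degeneracy asserted in the previous step. In full generality one must exclude the possibility that a systematic higher-order effect, or the optimizer's tie-breaking rule, pins $\hat\theta_j$ to a single moving point of $A$ despite the lack of identification. I would close this gap by imposing an explicit, mild regularity condition -- roughly, that the Hessian of $l$ in directions transverse to $A$ is non-singular and that the limiting covariance of the score profiled along $A$ is non-degenerate -- under which the maximizer over $A$ of the Gaussian limit is provably non-constant; and I would point out that this condition is plainly met in the leading cases that motivate the test, namely label switching in a mixture of normals and the orthogonal-rotation indeterminacy of an SVAR identified through heteroscedasticity when shock variances do not change in different proportions, where $A$ is a finite orbit or a compact group orbit and the selection within $A$ is symmetric.
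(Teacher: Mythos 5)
Your proposal follows essentially the same route as the paper: first show that each sub-sample estimator converges to the observationally equivalent set $A$, then argue that the two independent estimators occupy different locations within $A$ with probability bounded away from zero, which rules out $\hat{\theta}_1-\hat{\theta}_2\to_p 0$. The paper formalizes the first step by writing $\hat{\theta}_j=\tilde{\theta}_j+\hat{e}_j$ with $\tilde{\theta}_j$ the nearest point of $A$ and asserting $p\lim\hat{e}_j=0$; your information-inequality/extremum-estimator argument supplies the justification the paper leaves implicit. The real divergence is in the within-$A$ step, and here your treatment is the more careful one. The paper simply declares that $\tilde{\theta}_1$ and $\tilde{\theta}_2$ are ``equally likely'' over $A$, writes the disagreement probability as a ratio of measures, and concludes $prob(\tilde{\theta}_1=\tilde{\theta}_2)=0$ and a disagreement probability of at least $0.5$ --- a step that is not derivable from the theorem's hypotheses (and is internally awkward when $A$ is finite: with two atoms the match probability need not be zero). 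You identify exactly this point as the crux, note that a systematic tie-breaking or normalization rule could in principle pin both estimators to the same point of $A$, and propose to close the gap with an explicit non-degeneracy condition on the transverse Hessian and the profiled score, verified in the motivating examples (label switching, rotation indeterminacy under equal variance shifts). So your argument reaches the same conclusion by the same decomposition and the same use of independence (your $\delta_0^2$ bound playing the role of the paper's $0.5$), but makes explicit an auxiliary assumption that the paper's proof smuggles in through the ``equally likely'' claim; the paper's version is nominally unconditional but heuristic at precisely that step, while yours is rigorous at the cost of strengthening the hypotheses.
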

\begin{proof}
	First, lets denote by $e$, a distance of a parameters vector $\theta$ to a set $A$ defined as 
	$$ e = \theta - arg\min_{\tilde{\theta}\in A}{|\theta - \tilde{\theta}|}$$
Then the parameter can be expressed as a sum 
$\theta = \tilde{\theta}+e$, with $\tilde{\theta} \in A$. Since $\theta_0\in A$ then the ML estimator will converge to $A$, meaning that $p\lim \hat{e} = 0$.

Lets now consider the difference $\hat{\theta}_1-\hat{\theta}_2$. It could be expressed as 
$$ \hat{\theta}_1-\hat{\theta}_2 = \tilde{\theta}_1-\tilde{\theta}_2 + \hat{e}_1-\hat{e}_2.$$
Since $p\lim \hat{e} = 0$ then also $p\lim (\hat{e}_1-\hat{e}_2)=0$. So the difference $ \hat{\theta}_1-\hat{\theta}_2$ will converge in probability to zero if and only if $p\lim (\tilde{\theta}_1-\tilde{\theta}_2) = 0$. The parameters $\tilde{\theta}_1$ and $\tilde{\theta}_2$ are equally likely, hence
$$ prob(|\tilde{\theta}_1-\tilde{\theta}_2|>\varepsilon) =\frac{||\tilde{\theta}_1-\tilde{\theta}_2|>\varepsilon|}{|A|}.$$
It can be noticed that since $A$ contains at least two elements and $\tilde{\theta}_1$ and $\tilde{\theta}_2$ are independent then for $\varepsilon = 0$ the probability $prob(\tilde{\theta}_1=\tilde{\theta}_2) =0$
and hence
$$ prob(|\tilde{\theta}_1-\tilde{\theta}_2|>0) \geq 0.5.$$
Hence there exists $\varepsilon$ such that 
$$ prob(|\tilde{\theta}_1-\tilde{\theta}_2|>\varepsilon) \nrightarrow 0 .$$
This implies that the difference $\hat{\theta}_1-\hat{\theta}_2$ does not converge in probability to zero.
\end{proof}

Although the distance between the two estimators $\hat{\theta}_1$ and $\hat{\theta}_2$ does not converge to zero, it does not  directly imply that the $W$ statistic under the alternative grows big enough to exceed the critical values. This is due to the fact that 
the lack of identification is associated with a singularity of a second derivative matrix, $H$ \cite[][]{Rothenberg1971}. Therefore, the behavior of the test statistic should be examined separately for different types of models and identification problems.

\subsection{Multiple splitting}


In the proposed approach, the data split is random -- not driven by the data itself. Hence it seems natural to repeat the splitting for w few times. Lets denote by $N$ the number of times the data is divided and the statistic $W_n$ with a corresponding p-value, $v_n$ is calculated. Here $n=1,2,...,N$.

Following \cite{Vovk:Wang:2020}, the p-values will be combined via averaging. As stated in the paper, the harmonic mean is a robust approach, which yields good results for both correlated and uncorrelated inputs.  
The main concern associated with application of the theory is the fact that the $p$-values are obtained with the asymptotic rather then actual distribution of the test statistic. Hence, some of the assumptions stated by  \cite{Vovk:Wang:2020} are not met. In particular, for small samples the $p$-values are not uniformly distributed. Therefore, the harmonic mean is applied here to the truncated sample $v_n\in V$, which includes only the observations of $v_n$ staying between the 20th and 80th percentiles of its empirical distribution. The truncation removes from the average elements, which are either very big (when $v_n$ is close to zero), or close to one. This change may have a positive impact on both the size and the power of the test, as it reduces symmetrically the input noise.

As the result, the averaged $p$-value, $\bar{v}_N$, is calculated according to the following formula
\begin{equation}\label{Eq:v}
\bar{v}_N = a_{N}\left(\frac{1}{\tilde{N}}\sum_{v_n\in V}{\frac{1}{v_n}}\right)^{-1}.
\end{equation}
It can be noticed that once the set $V$ is changed, the inverse of the $p$-values is averaged over a different number of observations, $\tilde{N}$. Here, $\tilde{N}=0.6N$. Finally, as stated by \cite{Vovk:Wang:2020}, the harmonic mean should be scaled by a factor $a_{N}$, which is
\begin{equation}
a_{N}=\frac{(z^*+N)^2}{(z^*+1)N,}
\end{equation}
where $z^*$ is the unique solution ($z\in(0,\infty)$) to the equation
\begin{equation*}
z^2=N((z+1)ln(z+1)-z).
\end{equation*}
The scaling factor is shown to $a_{N} \leq e \ln N$ and $a_{N}/\ln{N} \to 1$. Unfortunately, the convergence is very slow hence $a_{N}$ is approximated numerically separately for each $N$. The table with values of $a_{N}$ for selected number of averaged $p$-values can be found in \cite{Vovk:Wang:2020}.

It is expected that the theoretical properties of the truncated harmonic mean can be different from the untruncated one. However simulation results indicate its validity and importance for small samples, in which the distribution of $p$-values deviates significantly from the uniform one. The robustness of the averaging method to the selection of $V$ is discussed and evaluated in Section \ref{Sec:Robustness}.

\section{Test for an identification of a SVAR model}\label{Sec:SVAR}

\subsection{SVAR model: identification and estimation}

The identification problem is one of the central issues in SVAR modeling. The Vector Autoregressive (VAR) model describes the join behavior of a set of time series. Lets define a vector of endogenous variables as $Y_t\in R^K$. The reduced form of SVAR model takes the following form
\begin{equation}\label{eq:VAR}
Y_t = \mu + \sum_{p=1}^P A_pY_{t-p} + \varepsilon_t,
\end{equation}
where $\mu$ is a $K\times 1$ vector of intercepts, $A_p$ are $K\times K$ matrices describing the autoregresive structure and $\varepsilon_t$ is a $K\times 1$ vector of residuals. Since $\varepsilon_t$ are forecast errors, they are allowed to be correlated and hence their variance, $\Sigma$, is not diagonal.

The reduced form model, although useful for modeling of the expected value of endogenous variables, is not suited for risk analysis -- for example impulse responses or simulations. Therefor, a structural extension of the model (\ref{eq:VAR}) is used.
The SVAR model may be described in various ways (see \cite{Lutkepohl05} for an extension discussion). Here, the B-model is presented, in which the residuals,  $\varepsilon_{t}$, are a linear function of structural shocks, $u_{t}$. The structural shocks are assumed to be uncorrelated and have a diagonal variance-covariance matrix, $\Lambda$. Then
\begin{equation}
\label{eq:Structural_errors}
\varepsilon_{t}=Bu_{t}.
\end{equation}
The matrix $h$ is called \emph{an instantaneous effect matrix}, because it describes how structural shocks affect endogenous variables in the current time period, $t$. For example, $B_{ij}$ describes the impact of the $j$-th structural shock, $u_{j,t}$ on the $i$-th element of $Y_{t}$.
Notice that equation (\ref{eq:Structural_errors}) implies that
\begin{equation}\label{eq:Variance}
\Sigma=B\Lambda B',
\end{equation}
 so there is a direct relationship between $B$ and the variance of errors $\varepsilon_{t}$. It is typically assumed that either structural shocks have an identity variance-covariance matrix, $\Lambda=I$, or the diagonal elements of $B$ are equal to one. Here, the first approach is adopted.

Unfortunately, the parameters of SVAR model cannot be directly estimated because there are infinitely many matrices $B$, which satisfies (\ref{eq:Variance}).  As shown in the literature \cite[see][for a comprehensive discussion on VAR models]{Lutkepohl05}, the structural model requires estimation of $K^2$ elements of the $B$ matrix. At the same time, the variance-covariance matrix $\Sigma$ of the reduced form consists of only $K(K+1)/2$ distinct parameters. Therefore additional $K(K-1)/2$ restrictions need to be imposed, which will restrict the parameter space. 

The identification restrictions may stem either from the theoretical knowledge of the analyzed economic problem or from statistical properties of the data. \cite{LanneLutkepohl2008} shows that if there is a shift in the variance of structural shocks, one can recover the $B$ matrix without imposing additional constraints. \cite{LanneLutkepohl2008} assumes that the shift is due to a structural change, which appears at time $T_c$, and hence for $t\leq T_c$ the variance of shocks $\Sigma_u=I$, whereas for $t>T_c$ there is $\Sigma_u=\Lambda$. They stated that if the diagonal elements of $\Lambda$ are all different then the model parameters are identifiable. Unfortunately, this assumption could not be tested, because if some of the elements were identical, the model would become not identifiable and standard ML based statistics could not be used.

Using the results from previous Section \ref{Sec:Test}, a two step procedure, similar to \cite{Lutkepohl:etal:2021}, is proposed to estimate SVAR model. In the first step, the parameters of the reduced form of the model (\ref{eq:VAR}) are estimated with the Generalized Least Square (GLS) method. As the result, the residuals $\hat{\varepsilon}_t=Y_t-\hat{\mu}-\sum_{p=1}^P\hat{A}_pY_{t-p}$ are computed. In the second step, the structure model of errors,  $\hat{\varepsilon}_t$,  is estimated with the ML method and its identification is tested. In ML, it is assumed that shocks are independent and Gaussian. As the result, the residuals of the reduce form VAR follows:
 \begin{equation} \label{eq:e}
\varepsilon_t \sim \left\{ \begin{array}{lcl}
N(0,BB') & \mbox{for} & t\leq T_c\\ 
N(0,B\Lambda B') & \mbox{for} & t>T_c. \\
\end{array}\right.
\end{equation}
The log-likelihood function used in ML becomes
\begin{equation}
l(\theta) = -T\log\det(B)-\frac{T-T_c}{2}\log \det(\Lambda)-\frac{1}{2}\sum_{t=1}^{T_c}\varepsilon_t'(BB')^{-1}\varepsilon_t-\frac{1}{2}\sum_{t=T_c+1}^{T}\varepsilon_t'(B\Lambda B')^{-1}\varepsilon_t,
\end{equation}
where $\theta = (vec(B)', \lambda_1,...,\lambda_K)'$ is a vector of parameters ('vec' is an operator, which stacks columns of the given matrix into a single vector). As shown by \cite{Lutkepohl:etal:2021}, if we denote $\Sigma_1=BB'$ and $\Sigma_2=B\Lambda B'$ then diagonal elements of $\Lambda$ are the eigenvalues of the matrix $\Sigma_2\Sigma_1^{-1}$. The columns of the matrix $B=[b_1,...,b_K]$ are proportional to the eigenvectors satisfying the equation:
\begin{equation}
\left(\Sigma_1^{-1} - \lambda_k\Sigma^{-1}_2\right)b_k=0,
\end{equation}
for $k=1,2,..,K$. As shown by \cite{LanneMaciejowskaLutkepohl2010}, when the eigenvalues are all distinct then the matrix $B$ is unique up to a permutation and a sign of its columns. Therefore, in order to ensure that the two estimators: $\hat{\theta}_1$ and $\hat{\theta}_2$ corresponds to each other, it is assumed that $\lambda$'s are in a descending order ($\lambda_1\geq\lambda_2\geq ... \geq \lambda_K$) and that an element of each column $b_k$ with the largest absolute value is positive.

\subsection{Identification test}
In order apply the testing procedure described in Section, the following pair of hypothesis is considered
\begin{equation*}
H_0: \forall k\in\{1,...,K-1\}: \lambda_k \neq \lambda_{k+1}
\end{equation*}
\begin{equation*}
H_1: \exists k\in\{1,...,K-1\}: \lambda_k=\lambda_{k+1}
\end{equation*}
It can be noticed that the hypothesis are opposite to those stated by \cite{Lutkepohl:etal:2021}, which assumes that the model is identifiable under the alternative. As the result, \cite{Lutkepohl:etal:2021} needed to consider a set of different null hypothesis for models with $K>2$ and test the identifications sequentially. Here, the null hypothesis is unique and hence it is sufficient to compute the test only once.

The testing procedure consists of the following steps
\begin{enumerate}
	\item  Split randomly the sample $S=\{1,2,...,T\}$ into $S_1$, $S_2$ in such way that: $|S_1|=|S_2|=T/2$ and $|S_1\leq T_c| = |S_2\leq T_c|=T_c/2$, where $|S_i\leq T_c|$ is the number of observations such that $S_i\leq T_c$.
	\item Estimate the model parameters for the  whole sample, $\hat{\theta}$, and the two sub-samples $\hat{\theta}_1$, $\hat{\theta}_2$. 
	\item Compute the average second derivative of the Log Likelihood function for $\hat{\theta}$: $H_T(\hat{\theta})$
	\item Compute the test statistic
	\begin{equation}
	W=-T/4(\hat{\theta}_1-\hat{\theta}_2)'H_T(\hat{\theta})(\hat{\theta}_1-\hat{\theta}_2)
	\end{equation}	
	and compare it with $\chi^2(K^2+K)$. When the null is not rejected, there is no evidence that the model is not identifiable. 
\end{enumerate}	
When the underlying distribution of structural shocks is non Gaussian then the likelihood function is misspecified. In particular, $H_T(\hat{\theta})$ may not converge to its true values. For heavy tails distributions, such as $t$-Student distribution, the parameter variances can be underestimated and the $W$ statistic may reject the null too often. Since the Hessian of $l(\theta)$ depends on the inverse of the square of error variance, whereas the covariance of parameters depend on the fourth central moment of errors, the discrepancy between $E\varepsilon_t^4$ and $(E\varepsilon_t^2)^2$ is partially responsible for the misperformance of the test. Therefor, a kurtosis -- $\kappa$ -- of the underlying DGP seems crucial for adjusting the test to the non-Gaussian distributions. Here, a modified version of the $W$ statistic is proposed, in which 
\begin{equation}
\tilde{H}_T(\theta, \kappa) = H_T(\theta)/(\kappa/3).
\end{equation} 
and
	\begin{equation}
W(\hat{\kappa})=-T/4(\hat{\theta}_1-\hat{\theta}_2)'\tilde{H}_T(\hat{\theta}, \hat{\kappa}) (\hat{\theta}_1-\hat{\theta}_2)=-T/4(\hat{\theta}_1-\hat{\theta}_2)'H_T(\hat{\theta})(\hat{\theta}_1-\hat{\theta}_2)/(\hat{\kappa}/3),
\end{equation}
where $\hat{\kappa}$ is a consistent estimator of $\kappa$. In this article, an estimator discussed by \cite{Schott:2001}  and applied in \cite{Lutkepohl:etal:2021} is used in order to make the results comparable.
It could be noticed that for normally distributed residuals, $\kappa=3$ and $W=W(3)$. Moreover, for any consistent estimator of kurtosis, such that $\hat{\kappa} \to_p \kappa$, $W(\hat{\kappa})\to_d \chi^2(K^2+K)$.

Finally, the steps(1)-(4) can be repeated for $N$ times, as discussed in Section \ref{Sec:Test}, and the average $p$-value, $\bar{v}_N$ can be computed according to (\ref{Eq:v}). In may happen that  $\bar{v}_N>1$. In such case, the $\bar{v}_N$ is set to equal to one.

\section{Small sample properties of the identification test}\label{Sec:MC}

In order to assess the finite sample properties of the proposed test, a Monte Carlo (MC) experiment is conducted. It is assumed that structural shocks, $u_t$ are i.i.d. white noise processes, with a covariance matrix characterized by a structural change in period $T_c=\tau T$. Similar to \cite{Lutkepohl:etal:2021},  bivariate and three-dimensional GDPs are considered:
 
 \begin{itemize}
 	\item $\mathbf{DGP1}$: In the model $y_t=\varepsilon_t$ and $\varepsilon_t=Bu_t$. The structural shocks $u_t\in R^2$ are independent Gaussian or $t(5)$ distributions with the covariance matrix $\Lambda=diag(\lambda_1, \lambda_2)$ and the instantaneous effect matrix is $B=I_2$. The structural change occurs at $\tau=0.5$. Analogous to \cite{Lutkepohl:etal:2021}, two values of $\Lambda$ are considered: $(\lambda_1, \lambda_2) = (2,2), (2,1)$. 
 	
 	\item $\mathbf{DGP2}$: A two-dimensional VAR(2) model
 		$$ y_t = \left[ \begin{array}{c}
 		0.190 \\
 		0.523\\
 		\end{array}\right]+
 		\left[ \begin{array}{cc}
 	0.-0.036 &-0.705\\
 	-0.093& 1.211\\
 	\end{array}\right]y_{t-1}+
 	\left[ \begin{array}{cc}
 	0.090 &0.796\\
 	-0.085& -0.276\\
 	\end{array}\right]y_{t-2}+\varepsilon_t.$$
 	The residuals $\varepsilon_t=Bu_t$. The structural shocks $u_t\in R^2$ are independent Gaussian with $\Lambda=diag(\lambda_1, \lambda_2)$ and $(\lambda_1, \lambda_2) = (0.5,0.5), (0.5,0.1)$. The structural change occurs at $\tau = 0.3$ and the instantaneous effect matrix is
 	$$ B = \left[ \begin{array}{cc}
 	0.317 &1.059\\
 	0.242& -0.450\\
 	\end{array}\right].$$
 	
 	\item $\mathbf{GDP3}$: A three dimensional VAR(0) model with $y_t=\varepsilon_t$ and $\varepsilon_t=Bu_t$. The structural shocks $u_t\in R^3$ are independent Gaussian with $\Lambda=diag(\lambda_1, \lambda_2, \lambda_3)$ and $(\lambda_1, \lambda_2, \lambda_3) = (3,2,1), (3,2,2), (2,2,2)$. The structural change occurs at $\tau = 0.5$ and the instantaneous effect matrix $B = I_3$.
 	
 	 \item $\mathbf{GDP4}$:  A three dimensional VAR(0) model with $y_t=\varepsilon_t$ and $\varepsilon_t=Bu_t$. The structural shocks $u_t\in R^3$ are independent Gaussian with $\Lambda=diag(\lambda_1, \lambda_2, \lambda_3)$ and $(\lambda_1, \lambda_2, \lambda_3) = (2,0.4,0.2), (2,0.2,0.2), (1,1,1)$. The structural change occurs at $\tau = 0.4$ and the instantaneous effect matrix $B$ is.
 	$$ B = \left[ \begin{array}{ccc}
 	27.92    &0.231    &1.569\\
 	0.441    &5.643   & 0.079\\
 	0.496    &0.643   &-4.668\\
 	\end{array}\right].$$
 	The choice of DGP4 is inspired by the empirical example used in the Section \ref{Sec:Empirical example}.
 	\end{itemize}
 	
In order to enable a direct comparison of the MC results with the recent outcomes presented by \cite{Lutkepohl:etal:2021}, the same DGPs are selected for analysis. This simulations are extended by DGP 4, which helps to evaluate findings obtained during empirical application of the method.

In this novel testing procedure, the structural model is fitted to two sub-samples of the size $T/2$, so the sample sizes used in MC are doubled as compared to \cite{Lutkepohl:etal:2021}. As the result, $T\in \{200, 500, 1000\}$ is considered. Finally, to examine the asymptotic properties of the test, the results for $T=2000$ are also analyzed. The number of Monte Carlo replications is set to equal 1000. The sample splitting is repeated for 100 times, when the multiple split approach is used.
 
\subsection{Non-Gaussian residuals}
First, let us analyze the results of DGP1 for two different methods of selecting the kurtosis value: $\kappa=3$ and $\kappa=\hat{\kappa}$. The fist case corresponds to the assumption of residuals normality, whereas the second one represents the case, in which the residuals are potentially non-normal.
Table \ref{tab:MC:GDP1} presents the outcomes for a single split (W) and a combined multiple split (AveW) statistics. Columns 3-6 and 7-10 show the empirical size and the empirical power of the test, respectively, for the nominal significance level 5\%.

First, let us analyze the results for Gaussian shocks. It can be observed that for short samples ($T=200$) the basic, $W$, test has a size exceeding 5\%. As the sample size increases, the empirical size converges to its nominal level. Hence, the test is asymptotically valid. At the same time, the size of the $AveW$ test is lower than for $W$ and falls to zero once the sample reaches 1000 observations. This reflects the fact that harmonic mean provides conservative p-values (see \cite{Vovk:Wang:2020}). When the power of these tests are considered, it can be noticed that the $W$ test has relatively poor power (reaching 50\% for $T=1000$). At the same time, the power $AveW$ is much higher and exceeds 80\% for $T=2000$.

The results are quite different, when the test is used for a non Gaussian distribution. If the errors are $t$-distributed then their kurtosis differ from $3$ and $W$ statistic is misspecified. As the result, $W$ is considerably oversized even in large samples. For example, for $T=2000$, the test rejects the null in more than 10\% cases for 5\% significance level. In contrast, when $\kappa=\hat{\kappa}$ is allowed the rejection frequencies becomes much closer to the significance level when the null hypothesis is true. For example, for $T=1000$ the test reject $H_0$ in 7\% cases and one could observe a slow convergence to the nominal level. The power of the test is comparable with that for Gaussian processes.

Finally, one can not observe large differences between Gaussian and t-Student distributions for $AveW$ test. In case of non Gaussian errors, its size is slightly larger than previously and exceeds 5\%, but the difference diminish once a longer sample ($T=500$) is used. Moreover, the power of the $AveW$ remains at a similar level even when the distribution changes. As the result, the $AveW$ outperforms basic $W$ test in terms of the power level and the robustness to the error distribution.

In the remaining part of the article, it is assumed that the kurtosis parameter is estimated, $\kappa=\hat{\kappa}$.

\begin{table}
	\caption{Empirical size and power of identification tests $W$ and $AveW$ , DGP1}
	\label{tab:MC:GDP1}
	\centering
	
\begin{tabular}{c|c|cc|cc|cc|cc}
 \multicolumn{2}{c}{}&	\multicolumn{4}{|c}{Size} &	\multicolumn{4}{|c}{Power}\\ 
\multicolumn{2}{c}{} &	\multicolumn{4}{|c}{ $(\lambda_1,\lambda_2)=(2,1)$} &	\multicolumn{4}{|c}{ $(\lambda_1,\lambda_2)=(2,1)$}\\ 
\cline{1-10}
Kurtosis	&Sample	&\multicolumn{2}{|c}{Gaussian} 	& \multicolumn{2}{|c}{$t(5)$} &\multicolumn{2}{|c}{Gaussian} 	& \multicolumn{2}{|c}{$t(5)$}\\
\cline{3-10}
 ($\kappa$)	&($T$)	 	&	W	&	AveW	&	W	&	AveW &	W	&	AveW	&	W	&	AveW	\\

\hline
3	&	200	&	0.126	&	0.069	&	0.180	&	0.114&	0.366	&	0.532	&	0.348	&	0.503	\\
	&	500	&	0.077	&	0.004	&	0.132	&	0.019&	0.459	&	0.644	&	0.404	&	0.639	\\
	&	1000	&	0.045	&	0.000	&	0.100	&	0.001	&	0.487	&	0.745	&	0.490	&	0.723	\\
	&	2000	&	0.056	&	0.000	&	0.111	&	0.000	&	0.552	&	0.829	&	0.505	&	0.786	\\
	\hline
$\hat{\kappa}$	&	200	&	0.112	&	0.068	&	0.133	&	0.094	&	0.365	&	0.510	&	0.382	&	0.534	\\
	&	500	&	0.059	&	0.006	&	0.084	&	0.016	&	0.427	&	0.637	&	0.483	&	0.676	\\
	&	1000	&	0.055	&	0.000	&	0.072	&	0.000&	0.494	&	0.740	&	0.532	&	0.760	\\
	&	2000	&	0.054	&	0.000	&	0.060	&	0.000	&	0.559	&	0.833	&	0.565	&	0.852\\
	\hline
	
	\end{tabular} 	

\vspace{0.2cm}
\small{Note: for $AveW$: number of splits is $N=100$ and the truncation is at 20\% and 80\% percentiles}	
\end{table}

\subsection{Estimation of VAR parameters}
 
Next, let us analyze the impact of the estimation of VAR parameters on the test performance. DGP2 is generated as a VAR(2) process, which parameters correspond to the estimates of the \cite{BlanchardQuah1986} model fitted to updated time series, as in \cite{Chen:Natsunajev:2016}. As noticed by the \cite{Lutkepohl:etal:2021}, the performance of the identification test may depend on the proper selection of the autoregression order. Therefore, the testing procedure was conducted for both VAR(2) and VAR(1) specifications. The results are presented in Table \ref{tab:MC:GDP2}. 

The outcomes of both $W$ and $AveW$ tests are similar to DGP1. One cannot observe any significant change of the size or the power of these test when the misspecified VAR(1) model is fitted. In case of both VAR(2) and VAR(1) models, the size of the basic $W$ test is close to the nominal level, whereas the $AveW$ test is too conservative. The power of the $W$ oscillates around 50\% for long samples. Finally, the power of $AveW$ test reaches 80\% for $T=2000$.

\begin{table}
	\caption{Empirical size and power of identification tests $W$ and $AveW$, DGP2}
	\label{tab:MC:GDP2}
	\centering
	
	\begin{tabular}{c|cc|cc|cc|cc}
	
\multicolumn{1}{c}{} &	\multicolumn{4}{|c}{Size} &	\multicolumn{4}{|c}{Power}\\ 
\multicolumn{1}{c}{}  &	\multicolumn{4}{|c}{ $(\lambda_1,\lambda_2)=(0.5,0.1)$} &	\multicolumn{4}{|c}{ $(\lambda_1,\lambda_2)=(0.5,0.5)$}\\ 
		\cline{1-9}
	Sample		&\multicolumn{2}{|c}{VAR(1)} 	& \multicolumn{2}{|c}{VAR(2)} &\multicolumn{2}{|c}{VAR(1)} 	& \multicolumn{2}{|c}{VAR(2)}\\
		\cline{2-9}
		($T$)	 	&	W	&	AveW	&	W	&	AveW &	W	&	AveW	&	W	&	AveW	\\
		
		\hline
	200	&	0.063	&	0.000	&	0.067	&	0.000&	0.366	&	0.637	&	0.340	&	0.588	\\
	500	&	0.060	&	0.000	&	0.054	&	0.000&	0.429	&	0.709	&	0.428	&	0.712	\\
	1000	&	0.036	&	0.000	&	0.052	&	0.000&	0.494	&	0.788	&	0.484	&	0.758	\\
	2000	&	0.052	&	0.000	&	0.055	&	0.000&	0.552	&	0.836	&	0.513	&	0.832	\\
		\hline

	\end{tabular} 	

\vspace{0.2cm}
\small{Note: kurtosis parameter is $\kappa=\hat{\kappa}$; for $AveW$: number of splits is $N=100$ and the truncation is at 20\% and 80\% percentiles}	
\end{table}

\subsection{Processes of higher dimension}

In many empirical problems, the researchers need to analyze higher-dimension processes. The approach proposed by \cite{Lutkepohl:etal:2021} experiences some problems in case of three-dimensional Gaussian DGP due to a need of sequential testing when a type of the identification violation is not known. The method presented in this article solves some of the previous issues as it is asymptotic valid regardless of the process dimension. Clearly, it is expected that long samples are needed to estimate the model parameters accurately but the asymptotic properties should be maintained under the null hypothesis. 

The simulation results for the three dimensional case are presented in Table \ref{tab:MC:GDP3}-\ref{tab:MC:GDP4}. In these tables, the results of $W$ and $AveW$ approaches are accompanied by the outcomes of the $Q$ test proposed by \cite{Lutkepohl:etal:2021}. It should be notices that in case of $Q$, there are three potential null hypothesis ($\lambda_1 = \lambda_2$, $\lambda_2 = \lambda_3$, $\lambda_1 = \lambda_2= \lambda_3$). The  $p$-value presented in the is tables is the smallest $p$-value of the these three tests. Moreover, it should be rememberd that the hypothesis of $Q$ test are stated differently: under the null, the model is not identified whereas under the alternative it is. Hence, the presented rejections of null have different interpretation: for $W$ and $AveW$ tests they show how often the approach indicates lack of identification  and for $Q$ it shows the opposite.

For GDP3 the parameters $(\lambda_1,\lambda_2, \lambda_3) = (3,2,1)$ represents the process under the null, whereas $(\lambda_1,\lambda_2, \lambda_3) = (2,2,2)$ or $(3,2,2)$ indicate the model is not identifiable. For the DGP4, the null is described by $(\lambda_1,\lambda_2, \lambda_3) = (2,0.4,0.2)$. The results confirm that the test is oversize for short samples. As expected, the empirical size converges to the nominal level as the sample grows but substantially exceeds 5\% for $T<2000$. When the two test  $W$ and $AveW$ are considered, the results supports previous findings. The multiple split approach is conservative as its empirical size converges to zero. Moreover, it has substantially higher power than the $W$ test, particularly for $(\lambda_1,\lambda_2, \lambda_3)$ equal to $(3,2,2)$ for GDP3 and $(2,0.2,0.2)$ for GDP4. 

Next, let us compare the presented approaches with the $Q$ statistic. It can be noticed first that the \cite{Lutkepohl:etal:2021} test is too conservative as its empirical size falls short the nominal 5\%. Second, for very short samples, $T=200$ it may have almost no power: it indicates lack of identification in 95.8\% of cases when $(\lambda_1,\lambda_2, \lambda_3) = (3,2,1)$. Finally, as sample grows, the power of the test rises substantially and reaches almost 100\% for the longest samples $T=2000$.

The results indicate that for short samples, both testing approaches are prone to indicate falsely lack of identification. Hence, it should be used with some concern. Despite this fact, the $AveW$ test can be useful when evaluating the null even for $T=200$, as its size adjusted power is and 0.474 and 0.428 for DGP3 and DGP4, respectively.

\begin{table}
	\caption{Empirical size and power of tests $W$, $AveW$ and $Q$ for DGP3}
	\label{tab:MC:GDP3}
	\centering
	\begin{tabular}{c|c|cc|c}
	Parameters&	T &W & AveW & Q\\
		\hline
(3,2,1)	&	200	&	0.319	&	0.394	&	0.042	\\
&	500	&	0.180	&	0.143	&	0.470	\\
&	1000	&	0.113	&	0.038	&	0.824	\\
&	2000	&	0.072	&	0.02	&	0.992	\\
\hline
(3,2,2)	&	200	&	0.543	&	0.868	&	0.000	\\
&	500	&	0.548	&	0.852	&	0.000	\\
&	1000	&	0.537	&	0.821	&	0.000	\\
&	2000	&	0.567	&	0.829	&	0.000	\\
\hline
(2,2,2)	&	200	&	0.650	&	0.957	&	0.000	\\
&	500	&	0.765	&	0.989	&	0.003	\\
&	1000	&	0.843	&	0.994	&	0.013	\\
&	2000	&	0.881	&	1.000	&	0.045	\\

		\hline

	\end{tabular}

\vspace{0.2cm}
\small{Note: kurtosis parameter is $\kappa=\hat{\kappa}$; for $AveW$: number of splits is $N=100$ and the truncation is at 20\% and 80\% percentiles; for $Q$ test under $H_0$ the model parameters are not identified}
\end{table} 

\begin{table}
	\caption{Empirical size and power of tests $W$, $AveW$ and $Q$ for DGP4}
	\label{tab:MC:GDP4}
	\centering
	\begin{tabular}{c|c|cc|c}
		Parameters&	T &W & AveW & Q\\
		\hline
		(2,0.4,0.2)	&	200	&	0.148	&	0.087	&	0.560	\\
		&	500	&	0.071	&	0.003	&	0.938	\\
		&	1000	&	0.045	&	0.000	&	1.000	\\
		&	2000	&	0.054	&	0.000	&	1.000	\\
		
		\hline
		(2,0.2, 0.2)	&	200	&	0.369	&	0.515	&	0.048	\\
		&	500	&	0.422	&	0.623	&	0.050	\\
		&	1000	&	0.472	&	0.713	&	0.059	\\
	&	2000	&	0.536	&	0.813	&	0.039	\\
	
		\hline
		(1,1,1)	&	200	&	0.774	&	0.993	&	0.000	\\
		&	500	&	0.828	&	1.000	&	0.003	\\
		&	1000	&	0.846	&	0.998	&	0.013	\\
		&	2000	&	0.881	&	1.000	&	0.000	\\
		
		\hline
		
	\end{tabular}

\vspace{0.2cm}
\small{Note: kurtosis parameter is $\kappa=\hat{\kappa}$; for $AveW$: number of splits is $N=100$ and the truncation is at 20\% and 80\% percentiles; for $Q$ test under $H_0$ the model parameters are not identified}
\end{table} 

\subsection{Robustness analysis of $AveW$ approach}\label{Sec:Robustness}

In the $AveW$ test, the $p$-values, $v_n$, stemming from different sample splits are averaged via the harmonic mean. Before the mean is computed, the extreme values of $v_n$ are removed. In the above results, 20\% lowest and highest observations are truncated. This rises a question, how the adopted truncation threshold impacts the results. Therefore, different thresholds: (10\%,90\%) and (30\%, 70\%) are applied to the DGP1. The results are presented in Table \ref{tab:Appendix}, in which the first column shows the rejection frequency for the specification previously studied. Next columns describe the outcomes of different specifications. They indicate that the size and the power depends on the truncation used. For  (10\%,90\%) the null is rejected more often than for (20\%,80\%) and (30\%,70\%). This leads to a growth of power at the cost of exceeding the nominal size. When the adjusted power is analyzed, it shows that widening of the truncation bounds may be preferable. Although the results are quantitatively different, they reflect a similar pattern. Generally, the $AveW$ test is conservative,  its power increases with the sample size and is larger than for $W$ test.

Another important decision, which needs to be taken when applying $AveW$ is the choice of $N$: the number of $v_n$ to be averaged. In the above MC, $N$ is set to 100. In order to evaluate the robustness of the outcomes to the selection of $N$, two other values are used: $N=50, 200$. The resulting rejections of the null hypothesis for DGP1 are presented in the last two columns of Table \ref{tab:Appendix}. They show that the outcomes are robust. Change of $N$ and does not alter significantly neither empirical size nor power. 

To sum up, when applying the $AveW$ method, the researcher can choose any $N$, as long as its big enough. It should however select the truncation thresholds more carefully, taking into account that it may increase the size and result in an over liberal approach, as for $T=200$ and (10\%,90\%). 

\begin{table}
	\caption{Rejection frequencies of the null hypothesis of $AveW$ for different truncation thresholds and numbers the averaged $p$-values, $N$}
	\label{tab:Appendix}
	\centering
	\begin{tabular}{c|c|c|cc|cc}
		
		& Truncation& (0.2,0.8)& (0.1,0.9)& (0.3,0.7)& (0.2,0.8)& (0.2,0.8)\\
		\cline{2-7}
		$(\lambda_1,\lambda_2)$	&T/N & 100& 100& 100& 50& 200\\
		\hline
		(2,1)	&	200	&	0.068	&	0.146	&	0.034	&	0.06	&	0.069	\\
		(size)&	500	&	0.006	&	0.018	&	0.000	&	0.004	&	0.002	\\
		&	1000	&	0.000	&	0.001	&	0.000	&	0.000	&	0.000	\\
		\hline
		(2,2)	&	200	&	0.532	&	0.694	&	0.377	&	0.497	&	0.522	\\
		
		(power)&	500	&	0.644	&	0.814	&	0.516	&	0.662	&	0.666	\\
		&	1000	&	0.745	&	0.877	&	0.597	&	0.732	&	0.747	\\
		\hline
		(2,2)	&	200	&	0.464	&	0.548	&	0.343	&	0.437	&	0.453	\\
		(adjusted power)&	500	&	0.638	&	0.796	&	0.516	&	0.658	&	0.664	\\
		&	1000	&	0.745	&	0.876	&	0.597	&	0.732	&	0.747	\\
		\hline
	\end{tabular}

\vspace{0.2cm}
\small{Note: kurtosis parameter is $\kappa=\hat{\kappa}$}	
	
\end{table}

\section{Empirical example}\label{Sec:Empirical example}
The empirical example corresponds to the \cite{Kilian:2009} model of a global oil market. In \cite{Kilian:2009} a VAR model is applied to describe the joint behavior of three variables: $y_t=(\Delta prod_t, q_t, p_t)$, where $\Delta prod_t$ is the percentage growth of the global oil production, $q_t$ is the logarithm of a detreded index of real economic activity and $p_t$ is the logarithm of the real oil price. In the article, three structural shocks are analyzed: oil supply shock, aggregated demand shock and an oil-market-specific demand shock. In order to identify the shocks, the instantaneous effect matrix, $B$, was assumed to be lower triangular.

The model has been evaluated by \cite{Lutkepohl:Netsunajev:2014}, who explore heteroscedasticity to identify the structural parameters. In the paper, the monthly data from 1973m2--2006m12 are used to estimate a Markow Switching model. The results indicate that there are only weak evidence of a lower triangular form of the matrix $B$. This conclusion was next challenged by \cite{Lutkepohl:etal:2021}, where the shift of shocks volatility in October 1987 was used to identify the model. The outcomes suggest that the Kilian model can be only partially identified via the variance change and hence the lower-triangularity test should be compared with $\chi^2(2)$ distribution, rather than $\chi^2(3)$. These results show that identification testing can be crucial for interpreting the model outcomes.

Here we use the data explored by \cite{Lutkepohl:etal:2021} and apply the proposed testing procedure to verify if the change of shocks volatility is sufficient to indentify the model structure. Similar to previous papers, a VAR(3) model is fitted to the data with the GLS method. The estimated relative variances are $\lambda_1=0.186$,  $\lambda_2=0.363$ and $\lambda_3=2.157$. The full samples and sub-samples estimator of the randomly split sample are presented in Table \ref{tab:EX:param}. Although the estimators $\hat{\theta}_1$ and $\hat{\theta}_2$ are distinct, they are very similar, which supports the null hypothesis.

\begin{table}
	\caption{Estimates of model parameters for the whole sample ($\hat{\theta}$) and a randomly split sub-samples ($\hat{\theta}_1$, $\hat{\theta}_2$).}
	\label{tab:EX:param}
	\centering
	\begin{tabular}{c|ccc}
	
	Parameter	& $\hat{\theta}$& $\hat{\theta}_1$& $\hat{\theta}_2$\\
		\hline
$B_{11}$&27.918&   30.493 &  24.879\\
$B_{12}$&0.441  &  0.233  &  0.539\\
$B_{13}$&0.496  &  0.267  &  0.694\\
$B_{21}$&0.231  & -0.825  &  1.756\\
$B_{22}$&5.643  &  5.258  &  6.003\\
$B_{23}$&0.643  &  0.089  &  1.137\\
$B_{31 }$&1.569  &  1.203  &  1.792\\
$B_{32 }$&0.079  &  0.028  &  0.122\\
$B_{33 }$&-4.668 &  -3.378 &  -5.624\\
\hline
$\lambda_1$&0.186  &  0.184  &  0.189\\
$\lambda_2$&0.363  &  0.404  &  0.332\\
$\lambda_3$&2.157  &  4.173  &  1.461\\
\hline
	\end{tabular} 		

\end{table} 

Finally, the $W_n$ statistics are computed and compared with $\chi^2(12)$ distribution. The individual and averaged p-values are $v_1=0.5096$ and $\bar{v}=1$, which indicate that there are no strong evidences against the null hypothesis. The results support initial conclusions of  \cite{Lutkepohl:Netsunajev:2014} and question the lower-traingularity of the $B$ matrix. However, it should be noticed here that the Monte Carlo results show that the size adjusted power of the test for $T=500$ is 0.620, which is much below 1. Therefore, the conclusions presented  are not very strong and should be repeated once more data becomes available.

\section{Conclusions}\label{Sec:Conclusions}

In this article, a novel test is proposed to verify if the parametric model is identifiable. Unlike in previous papers, the test null hypothesis assumes model identification, whereas under the alternative the model parameters are not identified. This modifications allows to use the well known results of the ML estimation method, which states that under some regularity conditions ML estimators are consistent and asymptotically normal. In order to explore these features, two parameter estimators are compared, based on a random split of the sample. Under the null, both estimators are consistent and hence the distance between them converges to zero. The significance of the differences is assessed with a Wald type of test, which has a well known $\chi^2$ distributions with the number of degrees of freedom corresponding to the number of parameters. Next a modification of the approach is proposed, which explores outcomes from multiple splitting of the sample. The resulting $p$-values are combined with a truncated, scaled harmonic mean.

Next, the method is adjusted for testing whether heteroscedasticity is sufficient to identify parameters of a SVAR model. 
In order to assess the  properties of the test, a MC experiment is conducted. The data generating processes used here correspond to those applied by \cite{Lutkepohl:etal:2021}. The results of the analysis indicate that for short samples, the $W$ test is slightly over-sized. However, as the sample size increases, the empirical size converges to its nominal level. The power of the test depends on the dimension of the VAR model. It improves as the number of identical variances increases. It is also demonstrated that the multiple split test, $AveW$, is conservative, with its empirical size falling short of the nominal level. At the same time, the power to $AveW$ exceeds the power of $W$ counterpart. Moreover, it is more robust to VAR model misspecification and non-Gaussian distribution of residuals. These results shows that $AveW$ is a reasonable choice in empirical applications. 

I believe that the theory can be developed in various directions. First, it can be applied to SVAR models with more complex heteroscedasticity structure, as in \cite{Maciejowska:etal:2016}, \cite{Lutkepohl:Netsunajev:2014}. In such case a different sampling method should be considered. Moreover, it can be used to assess many other models, such as MN, MS or STR, which are affected by the identification problem. It would be desirable to evaluate the  empirical properties of the test in the context of a broad family of applications.


\bibliographystyle{elsarticle-harv}
\bibliography{bibliography}

\end{document}